\theoremstyle{plain}
\newtheorem{prop}{Proposition}
\theoremstyle{definition}
\numberwithin{equation}{section}
\newcommand{\R}{\mathbb{R}}
\newcommand{\wh}[1]{\widehat{#1}}
\DeclareMathOperator{\erf}{erf}
\DeclareMathOperator{\arctanh}{arctanh}
\newcommand{\UniBA}{%
   Universit\`{a} degli Studi di Bari ``Aldo Moro'',
   Department of~Economics and Finance,
   Largo Abbazia S.~Scolastica,
   Bari, I-70124 Italy}
\begin{document}

\title[Challenges in approximating the~Black and~Scholes call formula]
   {Challenges in approximating\\the~Black and~Scholes call formula\\with hyperbolic tangents}

\author{Michele Mininni}
\address{\UniBA}
\email{michele.mininni@uniba.it}

\author{Giuseppe Orlando}
\address{\UniBA}
\curraddr{Universit\`{a} degli Studi di Camerino, School of~Science and Technologies,
   Via M.~delle Carceri~9, Camerino, I-62032 Italy}
\email{giuseppe.orlando@uniba.it, giuseppe.orlando@unicam.it}

\author{Giovanni Taglialatela}
\address{\UniBA}
\email{giovanni.taglialatela@uniba.it}

\begin{abstract}
	In this paper we introduce the concept of standardized call function and we obtain a new approximating formula for the Black and~Scholes call function through the hyperbolic tangent. This formula is~useful for pricing and risk management as well as for extracting the~implied volatility from quoted options. The latter is~of~particular importance since it indicates the~risk of~the underlying and it is~the main component of~the~option's price. Further we~estimate numerically the approximating error of the~suggested solution and, by comparing our results in~computing the~implied volatility with the~most common methods available in~literature we discuss the challenges of this approach.
\end{abstract}

\keywords{%
   Black and~Scholes model,
   Hyperbolic tangent,
   Implied volatility.}
\thanks{JEL Classification: G10, C02, C88}
\subjclass[2010]{65-02, 91G20, 91G60}

\maketitle

\sloppy

\section*{Preface}

This is a reduced version of the paper without tables and figures.

{}For investors and traders a key component in their decision making is to assess the risk they run. A common way to do so is to recur to the dispersion of returns. However this measure has the problem that is computed on past performances and may have little to do with the current level of risk.
{}Within the Black-Scholes (BS) framework~\cite{Black1973}, later extended by Merton~\cite{Merton1973}, it is possible to identify a relation between the value of an asset and the option written on it. This is expressed through a link between the~option's price and some factors such as time, rates, dividends and, above all, volatility~\cite{Hull2006}. Since then, as the usage of the BS formula has became widespread in financial markets, options are traded and priced in terms of their risk or, in other terms, of the so-called implied~volatility (i.e.~the actual volatility embedded in the option's price).

For the reasons above mentioned a key issue has become the analytical tractability of the said formula for pricing, hedging and inversion (needed to find the BS implied volatility). 

{}On the other hand, given the structure of the BS formula that cannot be analytically inverted, the implied volatility can be found only through numerical approximation methods. However, in some instances, even those methods may fail for technical reasons~\cite{Orlando2017}.

In this work we~show a closed form formula for approximating the~BS call formula by means of~the hyperbolic tangent. This allows us to~determine both the value of~the call for any change of~the key variables and to~derive the~implied volatility at once for all possible combinations of~underlying, strike, time, etc.
To~achieve the~aforementioned result, we~introduce the~so called ``standardized call function'', which is a single-parameter function representing the~general family of~calls.

{}The paper is~organized as follows. The first section provides some background on both the~topic and the~literature and defines the~notations used in~the rest of~the paper.
The second and third sections  propose some approximations of  the call function in both  the cases $S \ne X$  and $ S = X$. The fourth section is devoted to derive the approximations of the implied volatility again in the cases $S \ne X$  and $ S = X$. 
In the fifth section we present some  results of numerical simulations  and compare such results with the ones given by other authors. 
Finally the last section summarizes this work and draws the~line for future research.

\section{Background}

\subsection{Literature review} \label{SS:LitReview}

Volatility on the markets has been calculated by many. Mo \& Wu (2007)~\cite{Mo2007},
on a sample from January~3, 1996 to~August 14, 2002, 346 weekly observations for each index,
reported that the~implied volatility on the~S\&P 500 Index (SPX), the~FTSE 100 Index (FTS),
and the~Nikkei-225 Stock Average (NKY) ranges from 15\% to~35\% with $S/X$ comprised between~80\% and~120\% and maturity 1m, 3m, 6m and 12m.
Glasserman \&~Wu (2011)~\cite{Glasserman2011}, on a sample consisting of~2049 data points
from August~9, 2001 to~June~16, 2009, found that the~implied volatility ranges from 5\% to~43\% with currency option on EURUSD, GBPUSD and USDJPY for at 10-delta put (P10d), 25-delta put (P25d), At-The-Money (ATM), 25-delta call (C25d), and 10-delta call (C10d).

Finally Figure \ref{fig:VIX} below shows the~overall daily distribution of~the VIX~\cite{CBOE1} since inception up to~2015. Throughout these years median volatility has been $\sim 18\%$, mode $\sim 13\%$ and average is~$\sim 20\%$ and high values are rare. 

\begin{figure} \label{fig:VIX}
\caption{}
\end{figure}

{}For computing both the~call price as well as the~implied volatility
several numerical algorithms are available~\cite{Dura2010}, \cite{Orlando2017}.
However there is~a cost and some drawbacks for those techniques which motivate, where possible,
the search for closed form approximations (see for example Manaster and Koehler (1982)~\cite{Manaster1982}).
Notwithstanding in~using those approximations one must be careful.
For example Estrella (1995)~\cite{Estrella1995} showed that, for a plausible range of~parameter values,
the Taylor series for the~BS formula diverges and ``even when the~series converges,
finite approximations of~very large order are generally necessary to~achieve acceptable levels of~accuracy''.
The alternative of~using the~exact formula with predetermined changes of~the underlying,
while it provides more accurate results, presents some drawbacks in~terms of~computations and loss of~flexibility~\cite{Estrella1995}.
Therefore ``simple solutions are desirous because they have two very attractive properties.
They are easy to~implement, and provide very fast computational algorithm''~\cite{Hofstetter2001}.

{}Among the~closed form approximations that rely on Taylor approximations or on the~power series expansion
of the~cumulative normal distribution function (cndf), there are Brenner \&~Subrahmanyam (1988)~\cite{Brenner1998},
Bharadia, Christofides \&~Salkin (1995)~\cite{Bharadia1995}, Chance (1996)~\cite{Chance1996},
Corrado \&~Miller (1996)~\cite{Corrado1996}~\cite{Corrado1996a}, Liang \&~Tahara (2009)~\cite{Liang2009}, Li (2005)~\cite{Li2005}
(for a review on the~subject see Orlando \&~Taglialatela (2017)~\cite{Orlando2017}).  

{}Other closed-form approximations that work only in~the At-The-Money case are the~P\'{o}lya approximation~\cite{Polya1949},~\cite{Matic2017},
and the~logistic approximation~\cite{Pianca2005}.
For~these, it has been shown that the~first is~remarkably accurate for a very large range of~parameters,
while the~second one is~accurate only for extreme maturities~\cite{Pianca2005}. 

{}Finally, Hofstetter \&~Selby (2001)~\cite{Hofstetter2001} obtained approximations
by replacing the~cndf by the~logistic distribution and Li (2008)~\cite{Li2008} developed a closed-form method based on the~rational functions.

\subsection{The Black and~Scholes Formula}

The BS formula for deriving the~price $C$ of a~European call option
is~described by   
\begin{equation} \label{E-BS-formula}
C :=  S\,N(d_1) - X\,N(d_2) \, , 
\end{equation}
where
\begin{itemize}
\item  $S$ is~the value of~the underlying,
\item  $X = K\,e^{-rT}$ is~the present value of~the strike price,
\item  $K$ is~the strike price,
\item  $r$ is~the interest rate,
\item  $T$ is~the time to~maturity in~terms of~a year,
\item  $N(x)$ is~the cumulative distribution function of~the standard normal i.e.
\[
N(x)
:= \frac{1}{\sqrt{2\pi\,}} \int_{-\infty}^x e^{-t^2/2} \, dt
\]

\item  $d_1 := \frac{\log(S/X)}{\sigma\sqrt{T\,}} + \frac{\sigma}{2}\sqrt{T\,}$
	is the~first parameter of~probability i.e.~``the factor by which the~present value of~contingent receipt of~the stock,
	contingent on exercise, exceeds the~current value of~the stock''~\cite{Nielsen1992},

\item  $d_2 := \frac{\log(S/X)}{\sigma\sqrt{T\,}} - \frac{\sigma}{2}\sqrt{T\,}$
	is the~second parameter of~probability which represents the~risk-adjusted probability of~exercise,

\item  $\sigma$ is~the volatility.
\end{itemize}

It is~worth noting that, given the~parameters $S, X, T$,
(or, equivalently $S, K, r, T$)
the price $C$ of the call  is a function $C=\mathscr{C}(\sigma)$ of the volatility $\sigma$;
the~implied volatility is~obtained by inverting such a function. In the following sections we shall obtain a suitable approximation  $\wh{\mathscr{C}}$ of $\mathscr{C}$ for all $S, X, T$. The implied volatility will be approximated by inverting $\wh{\mathscr{C}}$.

\section{Approximating the~call function when $S \ne X$}

\subsection{The standardized call function}

In order to~simplify the~presentation
we introduce a~family of standardized  call functions: 
\[
\chi_\alpha(x)
  :=  N\biggl(\frac{\alpha}2 \Bigl(x-\frac{1}{x}\Bigr) \biggr)
      -  e^{\alpha^2/2} \, N\biggl(-\frac{\alpha} 2\Bigl(x+\frac{1}{x}\Bigr)\biggr) \, ,
\qquad
x>0 \, , 
\]
depending on a~single parameter~$\alpha>0$.

The following Proposition contains the main properties of the mappings $\chi_\alpha$.

\begin{prop} \label{P-1}
For all $\alpha>0$ one has that:
\begin{enumerate}
\item[\textup{(i)}]    $\lim_{x\to0^+} \chi_\alpha(x) = 0$ and $\lim_{x\to+\infty} \chi_\alpha(x) = 1$.
\item[\textup{(ii)}]   $\chi_\alpha(x)$ is~strictly increasing in~$\left]0,+\infty\right[$.
\item[\textup{(iii)}]  $\chi_\alpha(x)$ is~strictly convex in~$\left]0,1\right]$ and~strictly concave in~$\left[1,+\infty\right[$.
\end{enumerate}
\end{prop}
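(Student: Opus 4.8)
The plan is to exploit a single algebraic identity relating the two arguments of $N$, which collapses the first derivative into a manifestly positive expression and makes all three claims fall out in sequence. Throughout I would write $u(x) := \frac{\alpha}{2}\bigl(x - \frac1x\bigr)$ and $v(x) := -\frac{\alpha}{2}\bigl(x + \frac1x\bigr)$, so that $\chi_\alpha(x) = N(u(x)) - e^{\alpha^2/2}N(v(x))$, and I would use $N'(z) = \frac{1}{\sqrt{2\pi}}e^{-z^2/2}$.

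For part (i) the argument is purely from the behaviour of $u$ and $v$ at the two ends. As $x \to 0^+$ both $u(x) \to -\infty$ and $v(x) \to -\infty$, so $N(u(x)) \to 0$ and $N(v(x)) \to 0$; since $e^{\alpha^2/2}$ is a fixed constant the whole expression tends to $0$. As $x \to +\infty$ we have $u(x) \to +\infty$ while $v(x) \to -\infty$, giving $N(u(x)) \to 1$ and $N(v(x)) \to 0$, hence $\chi_\alpha(x) \to 1$. This step is routine.

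The heart of the proof is the identity $v(x)^2 - u(x)^2 = \alpha^2$, which follows at once from $(x+\frac1x)^2 - (x-\frac1x)^2 = 4$. Consequently $e^{\alpha^2/2}N'(v(x)) = \frac{1}{\sqrt{2\pi}}\,e^{(\alpha^2 - v(x)^2)/2} = \frac{1}{\sqrt{2\pi}}\,e^{-u(x)^2/2} = N'(u(x))$. Differentiating $\chi_\alpha$ and substituting this identity, the terms carrying $N'(v)$ and $N'(u)$ combine into one, leaving $\chi_\alpha'(x) = N'(u(x))\,[u'(x) - v'(x)]$. A short computation gives $u'(x) - v'(x) = \alpha$, so $\chi_\alpha'(x) = \alpha\,N'(u(x)) > 0$ for every $x>0$, which proves (ii). I expect this identity to be the one non-obvious move; without it the derivative looks like a difference of two positive terms whose sign is unclear, whereas the identity shows the negative contribution exactly cancels part of the positive one.

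For part (iii) I would differentiate the clean expression $\chi_\alpha'(x) = \alpha\,N'(u(x))$ once more, using $N''(z) = -z\,N'(z)$, to obtain $\chi_\alpha''(x) = -\alpha\,u(x)\,u'(x)\,N'(u(x))$. Since $\alpha>0$, $N'(u(x))>0$, and $u'(x) = \frac{\alpha}{2}\bigl(1+\frac{1}{x^2}\bigr) > 0$, the sign of $\chi_\alpha''(x)$ is exactly the sign of $-u(x)$, i.e.\ of $\frac1x - x$. This is positive on $(0,1)$ and negative on $(1,+\infty)$, vanishing only at $x=1$; hence $\chi_\alpha$ is strictly convex on $\left]0,1\right]$ and strictly concave on $\left[1,+\infty\right[$, with $x=1$ the unique inflection point. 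The only care needed is to observe that $\chi_\alpha''$ vanishing at the single endpoint $x=1$ does not spoil strict convexity or concavity on the respective closed intervals.
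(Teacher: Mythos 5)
Your proof is correct and follows essentially the same route as the paper: the identity $v(x)^2-u(x)^2=\alpha^2$ is exactly the paper's observation that $\exp\bigl[-\tfrac{\alpha^2}{8}(x-\tfrac1x)^2\bigr]=e^{\alpha^2/2}\exp\bigl[-\tfrac{\alpha^2}{8}(x+\tfrac1x)^2\bigr]$, and your resulting expressions $\chi_\alpha'(x)=\alpha N'(u(x))$ and $\chi_\alpha''(x)=-\alpha\,u(x)\,u'(x)\,N'(u(x))$ coincide with the paper's formulas \eqref{E-fp} and \eqref{E-fpp}. Your packaging via $u$, $v$ and the cancellation $u'-v'=\alpha$ is a slightly cleaner way to organize the same computation.
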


\begin{proof}[Proof of~\textup{(i)}]
It's a trivial consequence of~the limits at $\pm\infty$ of~$N(x)$.
\end{proof}

\begin{proof}[Proof of~\textup{(ii)}]
We have:
\begin{align*}
\chi_\alpha'(x)
&  =  \frac{\alpha}{2\,\sqrt{2\pi\,}} \,
       \exp\biggl[-\frac{\alpha^2}{8}\Bigl(x-\frac{1}{x}\Bigr)^2\biggr]
       \Bigl(1 + \frac{1}{x^2}\Bigr)  \\
&  \qquad
   +  \frac{\alpha \, e^{\alpha^2/2}}{2\,\sqrt{2\pi\,}} \,
      \exp\biggl[-\frac{\alpha^2}{8}\Bigl(x+\frac{1}{x}\Bigr)^2\biggr]
      \Bigl(1 - \frac{1}{x^2}\Bigr)
\end{align*}
and, since
\[
\exp\biggl[-\frac{\alpha^2}{8}\Bigl(x-\frac{1}{x}\Bigr)^2\biggr]
=  e^{\alpha^2/2} \, \exp\biggl[-\frac{\alpha^2}{8}\Bigl(x+\frac{1}{x}\Bigr)^2\biggr]
\]
we get
\begin{equation} \label{E-fp}
\chi_\alpha'(x)
  =  \frac{\alpha}{\sqrt{2\,\pi\,}} \, \exp\biggl[-\frac{\alpha^2}{8}\Bigl(x-\frac{1}{x}\Bigr)^2\biggr] \, ,
\end{equation}
from which we~derive that $\chi_\alpha(x)$ is~strictly increasing in~$\left]0,+\infty\right[$.
\end{proof}

\begin{proof}[Proof of~\textup{(iii)}]
Differentiating \eqref{E-fp} we~get
\begin{equation} \label{E-fpp}
\chi_\alpha''(x)
=  \frac{\alpha^3}{4\,\sqrt{2\,\pi\,}} \,
\exp\biggl[-\frac{\alpha^2}{8}\Bigl(x-\frac{1}{x}\Bigr)^2\biggr] \frac{1-x^4}{x^3} \, ,
\end{equation}
from which we~derive that $\chi_\alpha(x)$ is~strictly convex in~$\left]0,1\right]$ and~strictly concave in~$\left[1,+\infty\right[$.
\end{proof}

For $S$, $X$ and $T$ fixed,
the relationship between the call function $C= \mathscr{C}(\sigma) $ 
and the family of functions $(\chi_\alpha)_{\alpha>0}$ is contained in the following

\begin{prop} \label{P-2}
Let us fix $S>0$, $X>0$ and $T>0$, with $X \ne S$,
and let us put  
\[
\alpha
  :=  \sqrt{2\,\bigl|\log(S/X)\bigr|\,} \, ;
\]
then we~have
\[
\mathscr{C}(\sigma)
  =  \begin{cases}
     S\,\chi_\alpha\Bigl(\frac{\sigma\,\sqrt{T\,}}{\alpha}\Bigr)
     &  \text{if $X>S$} \, ,  \\*[2mm]
     S-X + X\,\chi_\alpha\Bigl(\frac{\sigma\,\sqrt{T\,}}{\alpha}\Bigr)
     &  \text{if $X<S$} \, .
\end{cases}
\]
\end{prop}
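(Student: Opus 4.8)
The plan is to reduce the Black and~Scholes formula~\eqref{E-BS-formula} to the definition of~$\chi_\alpha$ by a single change of~variable, exploiting the sign of~$\log(S/X)$ to separate the two cases. First I~would set $x := \frac{\sigma\sqrt{T\,}}{\alpha}$, so that $\sigma\sqrt{T\,} = \alpha\,x$ and the probability parameters become
\[
d_1 = \frac{\log(S/X)}{\alpha\,x} + \frac{\alpha\,x}{2}\,, \qquad
d_2 = \frac{\log(S/X)}{\alpha\,x} - \frac{\alpha\,x}{2}\,.
\]
Since $\alpha^2 = 2\bigl|\log(S/X)\bigr|$, one has $\log(S/X) = -\alpha^2/2$ when $X>S$ and $\log(S/X) = +\alpha^2/2$ when $X<S$. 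Substituting this identity turns the logarithmic term $\tfrac{\log(S/X)}{\alpha x}$ into $\pm\tfrac{\alpha}{2x}$ and collapses $d_1,d_2$ into the combinations $\tfrac{\alpha}2\bigl(x-\tfrac1x\bigr)$ and $\tfrac{\alpha}2\bigl(x+\tfrac1x\bigr)$ (up to a sign), which are precisely the arguments of the two Gaussian terms defining~$\chi_\alpha$.

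In the case $X>S$, substituting $\log(S/X)=-\alpha^2/2$ yields $d_1 = \tfrac{\alpha}2\bigl(x-\tfrac1x\bigr)$ and $d_2 = -\tfrac{\alpha}2\bigl(x+\tfrac1x\bigr)$, which are exactly the arguments occurring in~$\chi_\alpha$. Since $\alpha^2/2 = \log(X/S)$ gives $X = S\,e^{\alpha^2/2}$, inserting these into~\eqref{E-BS-formula} produces
\[
\mathscr{C}(\sigma)
= S\,N\Bigl(\tfrac{\alpha}2\bigl(x-\tfrac1x\bigr)\Bigr)
  - S\,e^{\alpha^2/2}\,N\Bigl(-\tfrac{\alpha}2\bigl(x+\tfrac1x\bigr)\Bigr)
= S\,\chi_\alpha(x)\,,
\]
which is the claimed expression with $x = \frac{\sigma\sqrt{T\,}}{\alpha}$.

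The case $X<S$ is~the one requiring more care, and I~expect it to be the main obstacle. Here substituting $\log(S/X)=+\alpha^2/2$ gives $d_1 = \tfrac{\alpha}2\bigl(x+\tfrac1x\bigr)$ and $d_2 = -\tfrac{\alpha}2\bigl(x-\tfrac1x\bigr)$, so the arguments of $N(d_1)$ and $N(d_2)$ are the negatives of those appearing in~$\chi_\alpha$, and their roles are swapped. The natural route is to expand both the Black and~Scholes side $S\,N(d_1)-X\,N(d_2)$ and the candidate $S-X+X\,\chi_\alpha(x)$ using the reflection identity $N(-u)=1-N(u)$, together with $S = X\,e^{\alpha^2/2}$ (valid since now $\alpha^2/2 = \log(S/X)$), and to verify that both reduce to the common expression $S\,N\bigl(\tfrac{\alpha}2(x+\tfrac1x)\bigr) + X\,N\bigl(\tfrac{\alpha}2(x-\tfrac1x)\bigr) - X$. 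The delicate point is bookkeeping the reflection-induced constants so that the $S-X$ offset, reminiscent of put--call parity, emerges correctly; once that algebra is tracked, this case and the previous one together establish the proposition.
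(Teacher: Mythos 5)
Your proposal is correct and follows essentially the same route as the paper: substitute $x=\sigma\sqrt{T\,}/\alpha$, use $\alpha^2/2=\mp\log(S/X)$ to identify $d_1$, $d_2$ with the arguments of $\chi_\alpha$ (so that $X=S\,e^{\alpha^2/2}$ when $X>S$ and $S=X\,e^{\alpha^2/2}$ when $X<S$), and in the case $X<S$ apply the reflection $N(-u)=1-N(u)$ to produce the $S-X$ offset. The only difference is cosmetic: the paper starts from $S\,\chi_\alpha$ (resp.\ $X\,\chi_\alpha$) and reduces it to $\mathscr{C}(\sigma)$, whereas you start from $\mathscr{C}(\sigma)$ and meet the candidate expression in the middle.
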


\begin{proof}
Assume $X>S$;
since ${\alpha^2}/{2} = \log(X/S) = -\log(S/X)$, and therefore $X=Se^{\alpha^2/2}$,  we~have
\begin{align*}
S \, \chi_\alpha\Bigl(\frac{\sigma\,\sqrt{T\,}}{\alpha}\Bigr)
&  =  S \, N\Bigl(-\frac{\alpha^2}{2\,\sigma\,\sqrt{T\,}}
+ \frac{\alpha}{2}\,\frac{\sigma\,\sqrt{T\,}}{\alpha}\Bigr)
-  S\,e^{\alpha^2/2} \,
N\Bigl(-\frac{\alpha^2}{2\,\sigma\,\sqrt{T\,}}
- \frac{\alpha}{2}\,\frac{\sigma\,\sqrt{T\,}}{\alpha}\Bigr)  \\
&  =  S \, N\Bigl(\frac{\log(S/X)}{\sigma\,\sqrt{T\,}} + \frac{\sigma\,\sqrt{T\,}}{2}\Bigr)
-  X \, N\Bigl(\frac{\log(S/X)}{\sigma\,\sqrt{T\,}} - \frac{\sigma\,\sqrt{T\,}}{2}\Bigr)   =  \mathscr{C}(\sigma) \, .
\end{align*}

Assume $X<S$;
since $\alpha^2/2 = \log(S/X) = -\log(X/S)$ and therefore $S= X e^{\alpha^2/2}$,
we~have
\begin{align*}
X \, \chi_\alpha\Bigl(\frac{\sigma\,\sqrt{T\,}}{\alpha}\Bigr)
&  =  X \, N\Bigl(-\frac{\alpha^2}{2\,\sigma\,\sqrt{T\,}} + \frac{\alpha}{2}\,\frac{\sigma\,\sqrt{T\,}}{\alpha}\Bigr)
      -  X\,e^{\alpha^2/2} \,
N\Bigl(-\frac{\alpha^2}{2\,\sigma\,\sqrt{T\,}} - \frac{\alpha}{2}\,\frac{\sigma\,\sqrt{T\,}}{\alpha}\Bigr)  \\
&  =  X \, N\Bigl(-\frac{\log(S/X)}{\sigma\,\sqrt{T\,}} + \frac{\sigma\,\sqrt{T\,}}{2}\Bigr)
-  S \, N\Bigl(- \frac{\log(S/X)}{\sigma\,\sqrt{T\,}} - \frac{\sigma\,\sqrt{T\,}}{2}\Bigr) \, .
\end{align*}
Now, since
\[
N(x) = 1 - N(-x) \, ,
\qquad
\text{for any $x\in\R$} \, ,
\]
we~get
\begin{align*}
X \, \chi_\alpha\Bigl(\frac{\sigma\,\sqrt{T\,}}{\alpha}\Bigr)
&  =  X - X \, N\Bigl(\frac{\log(S/X)}{\sigma\,\sqrt{T\,}} - \frac{\sigma\,\sqrt{T\,}}{2}\Bigr)
-  S  + S\, N\Bigl(\frac{\log(S/X)}{\sigma\,\sqrt{T\,}} + \frac{\sigma\,\sqrt{T\,}}{2}\Bigr)  \\
&  =  X - S + \mathscr{C}(\sigma) \, . \qedhere
\end{align*}
\end{proof}

\subsection{Construction of~the approximating functions}

{}From~Proposition~\ref{P-2} it follows that,
in order to get a good approximation of the call functions $\mathscr{C}(\sigma)$,
it is sufficient to give, for all $\alpha>0$, a good approximation $\wh{\chi}_\alpha$ of~$\chi_\alpha$.

For the~sake of~simplicity, let us fix $\alpha>0$ so that we can omit the index $\alpha$
and simply denote $\chi$ and~$\wh{\chi}$ the mappings $\chi_\alpha$ and $\wh{\chi}_\alpha$.

By Proposition~\ref{P-1},
we~know that $\chi$ has a~sigmoidal shape.
This fact suggests us to look for an approximation based on the hyperbolic tangent 
\begin{equation} \label{E-tanh}
\tanh(x)
  =  \frac{e^x-e^{-x}}{e^x+e^{-x}}
  =  \frac{e^{2x}-1}{e^{2x}+1} 
\end{equation}
which has a similar shape and has the advantage of having a very simple inverse function 
\begin{equation} \label{E-arctanh}
\arctanh(x)
  =  \frac{1}{2} \, \log\Bigl(\frac{1+x}{1-x}\Bigr) \, .
\end{equation}

To be precise we look for an approximating function of the form
\begin{equation} \label{E-fcirc}
\wh{\chi}(x)
  :=  \frac{1}{2} + \frac{1}{2}\,\tanh\bigl(\varphi(x)\bigr)
   =  \frac{e^{2\,\varphi(x)}}{e^{2\,\varphi(x)}+1}
\end{equation}
where $\varphi \colon \left]0,+\infty \right[ \to \R$ is strictly increasing
and satisfies the conditions $\varphi(0^+) = -\infty$ and $\varphi(+\infty) = + \infty$,
so that $\wh{\chi}$ is strictly increasing and tends to $0$ as $x$ tends to $0$
and tends to $1$ as $ x$ tends to $+\infty$.

For example we can choose $\varphi$ of the form 
\begin{equation}\label{E-phi}
\varphi(x) := c_1\,x-\frac{c_2}{x}+c_3
\end{equation}
with $c_1>0$ and $ c_2>0$ so that
$\varphi(x)$ is strictly increasing and has the desired behaviour at $-\infty$ and $+\infty$.

Obviously, we have to choose the constants $c_1, c_2, c_3$ in such a way that $\wh{\chi}$
gives the best approximation of $\chi$;
hence we impose that both $\wh{\chi}$ and $\chi$ have an inflection point at $x=1$
with the same tangent lines there, i.e.\ we impose that $c_1$, $c_2$ and $c_3$ have to satisfy the conditions: 
\begin{align}
\wh{\chi}(1)
&  =  \chi(1) \, , \label{E-cond0}  \\
\wh{\chi}'(1)
&  =  \chi'(1) \, , \label{E-cond1}  \\
\wh{\chi}''(1)
&  =  \chi ''(1)
   =  0 \, . \label{E-cond2}
\end{align}

As
\[
\tanh'(x) = 1-\tanh^2(x) \, ,
\]
we have
\begin{align*}
\wh{\chi}'(x)
&  =  \frac{1}{2} \, \Bigl[1-\tanh^2\bigl(\varphi(x)\bigr)\Bigr]\,\varphi'(x) \, ,  \\
\wh{\chi}''(x)
&  =  -  \tanh\bigl(\varphi(x)\bigr)\Bigl[1-\tanh^2\bigl(\varphi(x)\bigr)\Bigr]\,[\varphi'(x)]^2
+  \frac{1}{2} \, \Bigl[1-\tanh^2\bigl(\varphi(x)\bigr)\Bigr]\,\varphi''(x)  = \\
&  =  \frac{1}{2} \, \Bigl[1-\tanh^2\bigl(\varphi(x)\bigr)\Bigr] \,
      \biggl[\varphi''(x) - 2\,\tanh\bigl(\varphi(x)\bigr)\,\bigl[\varphi'(x)\bigr]^2\biggr] \, .
\end{align*}
Thus conditions~\eqref{E-cond0}-\eqref{E-cond2} give
\[
\begin{cases}
\frac{1}{2} + \frac{1}{2}\,\tanh(c_1-c_2+c_3) = \chi(1)  \\*[3mm]
\frac{1}{2} \, \Bigl[1-\tanh^2(c_1-c_2+c_3)\Bigr]\,(c_1+c_2) = \chi'(1)  \\*[3mm]
-2\,c_2 - 2\,\tanh(c_1-c_2+c_3)\,(c_1+c_2)^2 = 0 \, .
\end{cases}
\]

The first equation gives
\[
c_1-c_2+c_3 = \arctanh\bigl(2\,\chi(1)-1\bigr)
\]
and we~can rewrite the~second and third equations as
\begin{gather}
\frac{1}{2} \, \Bigl[1-\bigl(2\,\chi(1)-1\bigr)^2\Bigr]\,(c_1+c_2) = \chi'(1) \, ,  \label{E-43} \\*[3mm]
-2\,c_2 - 2\,\bigl(2\,\chi(1)-1\bigr)\,(c_1+c_2)^2 = 0 \, . \label{E-44}
\end{gather}

{}From \eqref{E-43} we~get
\[
c_1+c_2
  =  \frac{\chi'(1)}{2\,\chi(1)\,\bigl(1-\chi(1)\bigr)} \, ;
\]
thus~\eqref{E-44} gives
\begin{equation} \label{E-c2}
c_2 = \frac{\bigl(1 - 2\,\chi(1)\bigr)\,\bigl[\chi'(1)\bigr]^2}{4\,\chi^2(1)\,\bigl(1-\chi(1)\bigr)^2}
\end{equation}
and consequently
\begin{align}
c_1
&  =  \frac{\chi'(1)\,\bigl[2\,\chi(1)\,\bigl(1-\chi(1)\bigr) - \bigl(1 - 2\,\chi(1)\bigr)\,\chi'(1)\bigr]} {4\,\chi^2(1)\,\bigl(1-\chi(1)\bigr)^2} \, ,  \label{E-c1}  \\
c_3
&  =  \arctanh\bigl(2\,\chi(1)-1\bigr)
      +  \frac{\chi'(1)\,\bigl[\bigl(1 - 2\,\chi(1)\bigr)\,\chi'(1) - \chi(1)\,\bigl(1-\chi(1)\bigr)\bigr]}
              {2\,\chi^2(1)\,\bigl(1-\chi(1)\bigr)^2} \, . \label{E-c3}
\end{align}

Hence $c_1$, $c_2$ and $c_3$ are uniquely determined and depend only on $\chi(1)$ and $\chi'(1)$;
{}from this it follows that $c_1$, $c_2$ and $c_3$ depend only on $\alpha$, since
\begin{align}
\chi(1)
&  =	N(0) - e^{\alpha^2/2} \, N(-\alpha)
=  \frac{1}{2} - e^{\alpha^2/2} \, N(-\alpha) \, , \label{E-f1}  \\
\chi'(1)
&  =  \frac{\alpha}{\sqrt{2\,\pi\,}} \, . \label{E-fp1}
\end{align}

We have to~check that $c_1$ and $c_2$ are positive.

We~begin with $c_2$.
According to~\eqref{E-f1} we~have
\begin{equation} \label{E-T0}
1 - 2\,\chi(1) = 2\,e^{\alpha^2/2} \, N(-\alpha) > 0\, ; 
\end{equation}
from this and from \eqref{E-c2} it follows $c_2>0$.

The proof of~the positivity of~$c_1$ is~a little more involved.
First of~all we~remark that
\begin{align*}
2\,\chi(1)  &  \,\bigl(1-\chi(1)\bigr) - \bigl(1 - 2\,\chi(1)\bigr)\,\chi'(1)  \\
&  =  2\,\Bigl(\frac{1}{2} - e^{\alpha^2/2} \, N(-\alpha)\Bigr)\,\Bigl(\frac{1}{2}+e^{\alpha^2/2} \, N(-\alpha)\Bigr)
        -  2\,e^{\alpha^2/2} \, N(-\alpha)\,\frac{\alpha}{\sqrt{2\,\pi\,}}  \\
&  =  \frac{1}{2} - 2\,e^{\alpha^2} \, N^2(-\alpha)
        -  2\,e^{\alpha^2/2} \, N(-\alpha)\,\frac{\alpha}{\sqrt{2\,\pi\,}}  \\
&  =  \frac{1}{2}\Bigl(1+\frac{\alpha^2}{2\,\pi}\Bigr) 
      - \frac{1}{2}\Bigl(\frac{\alpha}{\sqrt{2\,\pi\,}} + 2\,e^{\alpha^2/2} \, N(-\alpha) \Bigr)^2  \\
&  =  \frac{1}{2}
      \Biggl[\sqrt{1+\frac{\alpha^2}{2\,\pi}\,} + \frac{\alpha}{\sqrt{2\,\pi\,}} + 2\,e^{\alpha^2/2} \, N(-\alpha) \Biggr]
      \Biggl[\sqrt{1+\frac{\alpha^2}{2\,\pi}\,} - \frac{\alpha}{\sqrt{2\,\pi\,}} - 2\,e^{\alpha^2/2} \, N(-\alpha) \Biggr] \, .
\end{align*}

Since the~first factor is~positive,
in~order to~prove that $c_1>0$,
we need to~show that the~second factor is~positive too,
that is:
\begin{equation} \label{E-KP}
2\,e^{\alpha^2/2} \, N(-\alpha)
  \le  \sqrt{\frac{\alpha^2}{2\,\pi} + 1\,} - \frac{\alpha}{\sqrt{2\,\pi\,}} \, ,
\qquad
\text{for any $\alpha>0$} \, .
\end{equation}

Inequality~\eqref{E-KP} is~an immediate~consequence of~the Komatsu-Pollak estimate~\cite{Komatsu1955}~\cite{Pollak1956}.

\subsection{The approximating Call functions}

Hence, for all $\alpha>0$ we have found a good approximation to~$\chi_\alpha$,
namely the function 
\begin{equation} \label{E-Chatalpha}
\wh{\chi}_\alpha (x)
  =  \frac{1}{2} \Bigl( 1 + \tanh\bigl (\varphi_\alpha (x)\bigr) \, \Bigr)
\qquad
\text{with}
\qquad
\varphi_\alpha(x)
  :=  c_1(\alpha) \, x - \frac{c_2(\alpha)}{x} + c_3(\alpha) \, ,
\end{equation}
where $c_1(\alpha)$, $c_2(\alpha)$ and $c_3(\alpha)$ are given by 
\eqref{E-c1}, \eqref{E-c2} and \eqref{E-c3} with $\chi$ replaced by $\chi_\alpha $. 

{}From this and from~Proposition~\ref{P-2},
we~can conclude that for all $S>0$ and $X>0$, with $S\ne X$,
a~good approximation to~$\mathscr{C}(\sigma)$ is the function  
\begin{equation} \label{E-Chat}
\widehat{\mathscr{C}}(\sigma)
  :=  \begin{cases}
      S \, \wh{\chi}_\alpha\Bigl(\frac{\sigma\,\sqrt{T\,}}{\alpha}\Bigr) \qquad 
      &  \text{if $X>S$}  \\*[2mm]
      S-X + X\,\wh{\chi}_\alpha\Bigl(\frac{\sigma\,\sqrt{T\,}}{\alpha}\Bigr) \qquad \qquad 
      &  \text{if $X<S$}
\end{cases}
\end{equation}
with
\[
\alpha
  :=  \sqrt{2\,\bigl|\log(S/X)\bigr|\,} \, .
\]

\section{Approximating the~call function when $S=X$} \label{S:S=X}

In the~special case $S=X$,
we~have $d_1(\sigma) = \frac{\sigma}{2}\,\sqrt{T\,}$ and
$d_2(\sigma) = - \frac{\sigma}{2}\,\sqrt{T\,}=-d_1$,
hence~$\mathscr{C}(\sigma)$ reduces to
\[
\mathscr{C}(\sigma)
  =  S\,N\Bigl(\frac{\sigma}{2}\,\sqrt{T\,}\Bigr) - S\,N\Bigl(-\frac{\sigma}{2}\,\sqrt{T\,}\Bigr)
  =  \frac{S}{\sqrt{2\pi\,}} \int_{-\sigma\,\sqrt{T\,}/2}^{\sigma\,\sqrt{T\,}/2} e^{-t^2/2} \, dt
  =  S\,\erf\biggl(\sigma\,\sqrt{\frac{T}{8}\,}\biggr) \, ,
\]
where $\erf$ is~the \emph{error function} defined by
\[
\erf(z)
:=  \frac{2}{\sqrt{\pi\,}} \int_0^z e^{-t^2} \ dt \, .
\]

Also in this case, we consider an approximation of $\erf(z)$
which is based on the~hyperbolic tangent~\eqref{E-tanh}.
For example, following Ingber~\cite{Ingber1982},
a good approximation of~$\erf(z)$ may be
\[
\Theta_0(z)
  :=  \tanh\Bigl(\frac{2}{\sqrt{\pi\,}}\,z\Bigr)
\]
which has the~same limit at infinity and the~same derivative in~$0$.

A better approximation of~$\erf(z)$ is~the function
\[
\Theta_1(z)
  :=  \tanh \biggl( \frac{2}{\sqrt{\pi}} \, z + \frac{8-2\pi}{3\sqrt{\pi^3}} \, z^3 \, \biggr) \, ,
\]
which has the~same Taylor expansion of~order~3 in~0 of~the error function,
or the~function
\[ 
\Theta_2(z)
   :=  \tanh(a\, z +  b \, z^3) \, ,
\qquad
\text{with $a=1.129324$ and $b = 0.100303$} \, ,
\]
obtained in~\cite{Fairclough2000} by an optimization procedure. 

Numerical tests show that $\Theta_1(z)$ and $\Theta_2(z)$ provide approximations of the same order of accuracy, 
(see~\textsection\ref{SS:erf-vs-Theta} for details).

Hence we~have that $\mathscr{C}(\sigma)$ can be approximated by
\begin{align} \label{E-12}
\wh{\mathscr{C}}_0(\sigma)
&  =  S\,\tanh\biggl(\sigma\,\sqrt{\frac{T}{2\,\pi}\,}\biggr)
\intertext{%
or}
\wh{\mathscr{C}}_1(\sigma)
&  =  S\,\tanh\biggl(\, \Bigl( \sigma\,\sqrt{\frac {T}{2\pi}\,} \Bigr)
                      +  \frac{4-\pi}{12} \, \Bigl(\sigma\,\sqrt{\frac{T}{2\pi}\,}\Bigr)^3\biggr) \, , \label{E-13}
\intertext{%
or}
\wh{\mathscr{C}}_2(\sigma)
&  =  S\,\tanh\biggl( a\,\Bigl( \sigma\,\sqrt{\frac{T}{8}\,} \Bigr)
                      +  b\,\Bigl( \sigma\,\sqrt{\frac{T}{8}\,} \Bigr)^3\biggr) \, , \label{E-14}
\end{align}
with $a = 1.129324$ and $b = 0.100303$.

\section{Approximation of the implied volatility}

In order to find the implied volatility we should invert the call function $\mathscr{C}$,
i.e.\ we should be able to solve the equation $\mathscr{C}(\sigma)=C$.
Since $\wh{\mathscr{C}}$ is a good approximation of $\mathscr{C}$,
we~approximate the implied volatility by solving the equation
\begin{equation} \label{E-ChatC}
\wh{\mathscr{C}}(\sigma) = C \, .
\end{equation}

\subsection{Case $S \ne X$}

In this case, by \eqref{E-Chat}, equation \eqref{E-ChatC} is equivalent to
\[
\wh{\chi}_\alpha \Bigl(\frac{\sigma\,\sqrt{T\,}}{\alpha}\Bigr)
  =  C^* \, ,
\qquad
\text{where}
\quad
C^*
  :=  \begin{cases}
      C/S  &  \text{if $X>S$} \, ,  \\*[2mm]
      (C-S+X)/X  &  \text{if $X<S$} \, .
      \end{cases}
\]

According to~\eqref{E-fcirc}
such an equation is equivalent to
\begin{equation} \label{E-philog}
\varphi_\alpha\Bigl(\frac{\sigma\,\sqrt{T\,}}{\alpha}\Bigr)
  =  \frac{1}{2} \, \log\Bigl(\frac{C^*}{1-C^*}\Bigr) 
  =  \frac{1}{2} \, \log\Bigl(\frac{C - [S-X]^+}{S-C}\Bigr) \, ,
\end{equation}
where $[S-X]^+ = \max(S-X,0)$ is the pay-off. 

On the other hand, for any $\lambda\in\R$ the~equation 
\[
c_1\,x - \frac {c_2} x  + c_3 = \lambda \,
\qquad \text{i.e.} \qquad
c_1\,x^2 - (\lambda-c_3)\,x-c_2 = 0
\]
has a unique positive solution, given by
\[
x
  =  \frac{1}{2\,c_1} \Bigl[\lambda - c_3 + \sqrt{(\lambda-c_3)^2 + 4\,c_1\,c_2 \, }\Bigr] \, .
\]

Thus the implied volatility can be approximated by
\begin{equation}\label{E-inv}
\widehat{\sigma}
  =  \frac{\alpha}{2\,c_1(\alpha) \, \sqrt{T\,}}
     \left[\Lambda - c_3(\alpha)
            + \sqrt{\bigl(\Lambda - c_3(\alpha)\bigr)^2
            	      + 4\,c_1(\alpha) \,c_2(\alpha) \, }\right] \, ,
\end{equation}
where $\Lambda= \frac{1}{2} \, \log\Bigl(\frac{C - [S-X]^+}{S-C}\Bigr)$ and
$c_1(\alpha)$, $c_2(\alpha)$ and  $c_3(\alpha)$ are given by \eqref{E-c1}, \eqref{E-c2} and~\eqref{E-c3}.

\subsection{Case $S=X$}

In this case a first approximation of $\mathscr{C}$
is given by $\wh{\mathscr{C}}_0$ defined by \eqref{E-12} and therefore the approximating
equation $\wh{\mathscr{C}}_0(\sigma)=C$
has the solution
\begin{equation} \label{E-MOT11}
\wh{\sigma}_0
  =  \sqrt{\frac{\pi}{2\,T}\,} \, \log\Bigl(\frac{S+C}{S-C}\Bigr) \, .
\end{equation}

Better approximations of $\mathscr{C}$ are given by $\wh{\mathscr{C}}_1$ and $\wh{\mathscr{C}}_2$
defined by \eqref{E-13} and \eqref{E-14}.
Then the  equations $\wh{\mathscr{C}}_1(\sigma) = C$ and $\wh{\mathscr{C}}_2(\sigma) = C$
are equivalent to the equations 
\begin{equation} \label{E-131}
\biggl(\sigma\,\sqrt{\frac{T}{2\pi}\,}\biggr)
+  \frac {4-\pi}{12}\,\biggl(\sigma\,\sqrt{\frac{T}{2\pi}\,}\biggr)^3
=  \frac{1}{2} \, \log\Bigl(\frac{S+C}{S-C}\Bigr) \, .
\end{equation}
and 
\begin{equation} \label{E-141}
a\,\biggl(\sigma\,\sqrt{\frac{T}{8}\,}\biggr)
+  b\,\biggl(\sigma\,\sqrt{\frac{T}{8}\,}\biggr)^3
=  \frac{1}{2} \, \log\Bigl(\frac{S+C}{S-C}\Bigr) \, .
\end{equation}
respectively.

Now it is well known that for all $p>0$ the equation
\[
x^3 + 3\,p\,x = 2\,q
\]
has a unique \emph{real} solution given by
\[
x
  =  \sqrt[3]{  \sqrt{ p^3 + q^2 \,} + q } \;
     - \; \sqrt[3]{ \sqrt{ p^3 + q^2 \,} - q \,} \, .
\]
Hence the unique solution to equation~\eqref{E-131} is~given by
\begin{equation} \label{E-SigmaMOT2}
\wh{\sigma}_1
  :=  \sqrt{\frac{2\pi}{T}\,} \biggl[\sqrt[3]{  \sqrt{ p^3 + q^2 \,} + q } \;
     - \; \sqrt[3]{ \sqrt{ p^3 + q^2 \,} - q \,}\biggr] \, , 
\end{equation}
with
\[
p := \frac{4}{4-\pi}
\qquad \text{and} \qquad  
q := \frac{3}{4-\pi} \, \log\Bigl(\frac{S+C}{S-C}\Bigr) \, .
\]

Similarly the unique solution of equation~\eqref{E-141} is~given by
\begin{equation} \label{E-SigmaMOT3}
\wh{\sigma}_2
  :=  \sqrt{\frac{8}{T}\,} \biggl[\sqrt[3]{ \sqrt{ p^3 + q^2 \,} \, + q } \;
     - \; \sqrt[3]{ \sqrt{ p^3 + q^2 \,} \, - \, q } \; \biggr] \, ,
\end{equation}
where
\[
p := \frac{a}{3\,b}
\qquad \text{and} \qquad 
q := \frac{1}{4b} \, \log\Bigl(\frac{S+C}{S-C}\Bigr) \, .
\]

Thus we can conclude that
$\wh{\sigma}_0$, $\wh{\sigma}_1$ and $\wh{\sigma}_2$
given by \eqref{E-MOT11}, \eqref{E-SigmaMOT2} and \eqref{E-SigmaMOT3}
are approximating values of the implied volatility.

\section{Numerical results}

\subsection{Call when $S \ne X$}

By a numerical inspection we can see that the approximating function~$\wh{\chi}_\alpha(x)$ replicates well the standardized call function~$\chi_\alpha(x)$, except from the part on the far right hand side of the inflection point where the curvature is higher (and the volatility unrealistic). 

Therefore, to see better where the differences are and their magnitude we~studied the comparisons between $\chi_\alpha(x)$ and the~approximating function~$\wh{\chi}_\alpha(x)$ in~\eqref{E-Chatalpha} for one of the most common maturity (i.e.~$T=0.25$).

\subsubsection{Monte Carlo analysis}

Given $T=0.25$, we considered the difference $\chi_\alpha(x)-\wh{\chi}_\alpha(x)$ for 10,000 moneyness and 500 instances of $\sigma $ uniformly distributed between $[0\,,1.25]$. We first take the whole interval~$[0\,,1.25]$ and then we~divide it into five parts (each containing 100 $\sigma $) to~illustrate where the~differences are higher or lower.

We noticed that the biggest errors are when $\sigma \in [0.75\,,1.25]$
i.e.~for those cases in which the volatility is extremely rare (see Fig.~\ref{fig:VIX}).

\subsection{Call when $S = X$} \label{SS:erf-vs-Theta}

Here we start by plotting
the graph of the error function and~the approximating functions $\Theta_0$, $\Theta_1$ and~$\Theta_2$
as defined in Section \ref{S:S=X} then
we compare numerically some significant values.

To show the approximation's error
we plot the graph of the differences between the $\erf(z)$
and the functions $\Theta_0, \Theta_1$ and $\Theta_2$ as defined in Section \ref{S:S=X},
using a different scale for the $y$-axis.

Next we~compute some statistical values of  the  differences
between the error function and the functions $\Theta_0, \Theta_1$ and $\Theta_2$.

\subsubsection{Monte Carlo analysis}

Now we consider the lattice $\mathfrak{L}$ composed of the couples $(\sigma, T)$ with $\sigma= 10^{-4} \,j$, with $j=1,\dotsc, 10^{4}$ and $T =k/12 $ with $k=1,\dotsc, 24$. From it we randomly extract 10,000 samples in each interval to derive $z$. The statistics on the errors confirm the good quality of the approximation.

\subsection{Implied volatility}

As already mentioned there are several methods available in~literature
for deriving the~implied volatility through an approximated formula.
In Orlando and Taglialatela (2017)~\cite{Orlando2017}
we~compared the~results derived with Brenner \&~Subrahmanyam~\cite{Brenner1998},
Corrado \&~Miller~\cite{Corrado1996a},~\cite{Corrado1996} and Li~\cite{Li2005} formulae.
As we~found that the~latter is~the most accurate, we~consider Li formula as our benchmark.

\subsubsection{Case when $S \ne X$}

We compare the results obtained by Li formula denoted by $\widehat{\sigma}_{L}$
with those obtained with formula \eqref{E-inv} denoted by $\widehat{\sigma}$.
The prices of~all calls have been generated with the~BS model and, then,
the implied volatility has been derived by using the inversion formulae.
Each column provides the results of the said formulae for maturities $T$ from $0.1$ to $1.5$ versus the true volatility.

As shown, $\widehat{\sigma}$ is available even when $\widehat{\sigma}_{L}$ is not
and approximates better the~implied volatility for all maturities,
moneyness and level of~$ \sigma $ except the~part where it is~very high
(even though this occurrence is~quite unlikely as mentioned in Section \ref{SS:LitReview}).
Moreover the~error for the~$\widehat{\sigma}$ formula is~more consistent.

\subsubsection{Case when $S = X$}

Let us now consider the ATM case.
We compare the implied volatility $\widehat{\sigma}_{L}$ calculated with Li formula~\cite{Li2008}, $\widehat{\sigma}_1$ and $\widehat{\sigma}_2$ as defined in \eqref{E-SigmaMOT2} and~\eqref{E-SigmaMOT3}.
The prices of~all calls are generated with the~BS model for a given volatility ranging from 15\% to~125\% and maturity comprised between 0.1 and~1.5 years.
We also perform some statistics on the error between the~true volatility and the~one estimated.

\section{Conclusions}

In this work we~have recalled the~importance of~calculating the~value of~the call for pricing as well as for inferring the~implied volatility. Even though calculations can be easily performed by using numerical methods we share the Li opinion~\cite{Li2005} that ``to simplify some applications such as spreadsheet, it~may be useful to~have an approximation formula if that formula is~reasonably simple, accurate and valid for a wide range of~cases. The cost and inconvenience of~iterating also motivate the~search for explicit formulas''. 

Here a standardized call function has been introduced to~represent the~whole family of~calls and to simplify the calculations.
In addition we~have shown how the~approximation of the aforementioned standardized call can be performed through hyperbolic tangents instead of~the usual Taylor truncation. This allows a greater accuracy for extreme values of $\sigma$, which makes this approach particularly suitable for stress testing and hedging purposes. Finally we~have derived some explicit formulae for approximating the~implied volatility that seem to be~superior to~the ones proposed in~literature so far and are~valid regardless of~option's moneyness.
Therefore because of~ the higher accuracy and flexibility, this approach could replace current methods with little additional effort.
Further research will address the cases when there is a marked difference between the approximation~$\wh{\mathscr{C}}(\sigma)$
and the~BS call~$\mathscr{C}(\sigma)$ in order to provide a better approximation.

\end{document}